\documentclass[runningheads]{llncs}
\usepackage[T1]{fontenc}
\usepackage{graphicx}
\usepackage{amsmath}
\usepackage{fancybox}
\usepackage{xspace}
\usepackage{amsfonts}
\usepackage{makecell}
\usepackage{tikz}
\usetikzlibrary{arrows,decorations,decorations.text,decorations.markings}
\usetikzlibrary{shapes,trees,positioning}

\usepackage{hyperref}
\hypersetup{colorlinks=true, unicode=true, linkcolor=[rgb]{0.10,0.05,0.67}, citecolor=[rgb]{0.10,0.05,0.67}, filecolor=[rgb]{0.10,0.05,0.67}, urlcolor=[rgb]{0.10,0.05,0.67}}

\usepackage{orcidlink}

\newcommand{\reservedWordTiles}[1]{\ovalbox{\ensuremath{\mathsf{#1}}\xspace}}
\newcommand{\tiles}[1]{\reservedWordTiles{#1}\xspace}

\newcommand{\Soda}{\textsc{Soda}\xspace}
\newcommand{\Tiles}{\ensuremath{\mathsf{Tiles}}\xspace}

\newcommand{\sBoolean}{\ensuremath{\mathbb{B}}\xspace}
\newcommand{\sAction}{\ensuremath{\mathbf{A}}\xspace}
\newcommand{\sFluent}{\ensuremath{\mathbf{F}}\xspace}

\newcommand{\subseteqFV}{\ensuremath{\subseteq _{\mathsf{F}} \mathsf{FV}}}

\newtheorem{proposition-appendix}{Proposition}[section]

\graphicspath{{images/}}

\begin{document}
    \title{Composable Verification Pipelines for Multi-Agent Systems}
%
%\titlerunning{Abbreviated paper title}
% If the paper title is too long for the running head, you can set
% an abbreviated paper title here
%
    \author{Julian~Alfredo~Mendez\orcidlink{0000-0002-7383-0529} \and
    Andreas Br\"{a}nnstr\"{o}m\orcidlink{0000-0001-9379-4281}}
    \authorrunning{Mendez and Br\"{a}nnstr\"{o}m}
% First names are abbreviated in the running head.
% If there are more than two authors, 'et al.' is used.
%
    \institute{Ume{\aa} University, Sweden \\
    \email{julian.mendez@cs.umu.se, andreas.brannstrom@umu.se}
    }
    \maketitle              % typeset the header of the contribution

    \begin{abstract}
        Existing approaches for reasoning about action and change provide expressive semantics for modeling dynamic systems, in most cases built on top of logic programming systems. We introduce a modular framework for transition and trajectory verification based on \Tiles and implemented in \Soda, which is an efficient functional programming language. The framework operationalizes action language semantics through executable verification pipelines that process states, actions, transitions, and rules as compositional functional components. Verification procedures are represented as typed functional pipelines, enabling modular specifications, reusable reasoning components, and transparent execution workflows with guaranteed pipeline termination. The framework includes an executable specification layer that allows users to define domain descriptions in YAML, which are operationalized into the underlying verification model and executable pipeline structure. We provide an open-source implementation and illustrate the framework through examples that involve misinformation and emotional reasoning.

        \keywords{Multi-Agent Systems, Action Languages, Formal Configurations}
    \end{abstract}

    \section{Introduction}
    \label{sec:intro}

    The increasing complexity of dynamic and multi-agent systems has intensified the need for executable verification mechanisms capable of analyzing evolving executions under rule-based constraints~\cite{ferrando2025vitamin,engelmann2023rv4jaca}. Modern systems increasingly operate through adaptive interactions, sequential decision making, and evolving behavioral contexts, requiring verification approaches that can transparently represent and analyze system evolution.
    Ensuring that executions satisfy desired properties and avoid undesirable behaviors requires methods that can explicitly represent, operationalize, and verify system evolution~\cite{brannstrom2026humanemotionverificationaction,park2024ai,monteith2024artificial}.

    Seminal work on action languages provide expressive semantics for reasoning about action and change~\cite{dworschak2008mbn,gelfond1998action,brannstrom2026humanemotionverificationaction}.
    Existing approaches to modeling dynamic systems include logic-based \cite{adam2009logical,lorini2011logic,dastani2012logic,balduccini2010formalization,brannstrom2026humanemotionverificationaction,dworschak2008mbn,gelfond1998action}, planning-based \cite{lorini2022cognitive,lorini2022cognitive,davila2021simple,cirillo2010human,McDermott-1998-PDDL}, including PDDL~\cite{McDermott-1998-PDDL}, and a range of agent-based approaches \cite{bolander2011epistemic,korevcko2013some,jones2009personality,pereira2007formal,steunebrink2012formal,sanchez2019designing,rudra2025composable}.
    Implementations found in the literature are usually based on logic programming systems, such as
    GOLOG~\cite{Levesque-1997-Golog}, % - https://en.wikipedia.org/wiki/GOLOG
    STRIPS~\cite{Fikes-1971-Strips}, and % - https://en.wikipedia.org/wiki/Stanford_Research_Institute_Problem_Solver
    DLV$^{\text{K}}$~\cite{Eiter-2003-DVLK}. % - https://www.dbai.tuwien.ac.at/proj/dlv/K/
    To bridge to other communities, such as those in functional and object-oriented programming, modular and composable software design patterns~\cite{harmelen2021modular,mishra2024scalable,rudra2025composable} can be included in these approaches.
    In that manner, this would simplify the deployment of planning and verification solutions with massively used technologies.

    In this paper, we propose a modular framework for transition and trajectory verification based on a compositional representation of functions and their execution. The framework structures verification procedures by composing simple functional units into pipelines. To support this, we employ \Tiles\cite{Mendez.Kampik.Aler.Dignum-2024-SCAI}, a framework for composing typed functions into graphical and modular specifications~\cite{Mendez.Kampik-2025-LNGAI}, together with \Soda\cite{Mendez-2023-Soda,Mendez.Kampik-2025-EUMAS}, a statically typed descriptive language that provides executable implementations of these specifications\footnote{\texttt{\url{https://github.com/julianmendez/verification-pipeline}}}. The framework further build on the action language ${\cal C}_{MT}$\cite{brannstrom2026humanemotionverificationaction}, which in turn is built on earlier action languages \cite{gelfond1998action,dworschak2008mbn}.
    The proposed approach operationalizes action language semantics through executable verification components that process states, actions, transitions, and rules within a unified pipeline architecture. By organizing verification as compositional pipelines over trajectories and transition sequences, the framework supports clear specification, reusable reasoning components, and modular verification workflows. The framework further supports modular refinement of trajectory models through compositional rule extensions, allowing additional contextual information to be incorporated without changing the underlying verification structure. We illustrate this through examples involving misinformation propagation and emotional reasoning, where emotional fluents refine the interpretation and verification of agent behavior.

    The use of \Tiles to model verification pipelines provides several advantages. It offers a graphical overview of the specification, improving transparency and interpretability. Through its integration with \Soda, it enables execution within the Java Virtual Machine (JVM) ecosystem \cite{lindholm2013java}.
    The pipeline-based design ensures termination of the verification process by construction.
    In summary, the contributions of this paper are as follows:
    \begin{itemize}
        \item We introduce a modular framework for transition and trajectory verification based on \Tiles and \Soda.
        \item The framework operationalizes action language semantics through executable verification pipelines.
        \item We provide a YAML-based specification layer and parser for operationalizing domain descriptions and input sequences of states and actions (trajectories) within the verification framework.
        \item We show that our approach provides a transparent, executable, and scalable method for verifying rule-based system executions.
    \end{itemize}

    The remainder of this paper is structured as follows. In Section~\ref{sec:preliminaries}, we introduce the preliminaries, including \Soda, \Tiles, and the ${\cal C}_{MT}$ action language. In Section~\ref{sec:model}, we present the formal model underlying our framework. In Section~\ref{sec:operationalization}, we describe the operationalization of the model using executable specifications. In Section~\ref{sec:evaluation}, we outline the evaluation of the proposed approach. In Section~\ref{sec:background}, we discuss related work and position our contribution. See the implementation in the associated repository, and see the proofs in Appendix A.

    \section{Preliminaries}
    \label{sec:preliminaries}

    In this section, we introduce the preliminaries underlying our framework, including \Soda, \Tiles, and the ${\cal C}_{MT}$ action language.

    \Soda (Symbolic Objective Descriptive Analysis) is a statically typed descriptive language for specifying requirements using a small set of constructs. A definition in \Soda has the form
    $f(x_{1} : A_{1}) \ldots (x_{n} : A_{n}) : A = e,$
    where $f$ is a function name, $x_i$ are parameters of type $A_i$, and $e$ is an expression of type $A$. A function without parameters is a constant. Functions can be invoked using named parameters with the notation $f(x := v)$. The language includes typed lambda expressions of the form $\lambda (x : A) \rightarrow e$, conditional expressions $\texttt{if } b \texttt{ then } e_1 \texttt{ else } e_2$, and pattern matching constructs
    $\texttt{match } x \ \texttt{case } p_1 \Rightarrow e_1 \ \ldots \ \texttt{case } p_n \Rightarrow e_n,$
    where $p_i$ are constructor patterns. Standard arithmetic and logical operators are provided, and logical expressions are evaluated with lazy evaluation.

    \Soda uses immutable definitions without side effects and supports recursive functions over finite structures. It provides classes with constant and function definitions, as well as abstract declarations that must be instantiated. Classes can be extended via \texttt{extends} and support polymorphism through type parameters with optional subtype and supertype bounds.
    \Soda specifications can be translated into Scala for execution on the Java Virtual Machine, and a fragment can be translated into Lean to enable formal verification.

    \Tiles is a framework for composing functions into structured descriptions. A \emph{tile} is a function with input and output types, denoted as
    \tiles{\mathsf{_{\alpha} \ \mathsf{name} \ _{\beta} } },
    where $\alpha$ and $\beta$ are types (atomic, tuple, or sequence). Tiles may depend on contextual information and can be combined by composition: given $\tiles{_{\alpha} \ \mathsf{tile}_{1} \ _{\beta} }$ and $\tiles{_{\beta} \ \mathsf{tile}_{2} \ _{\gamma} }$, their composition yields $\tiles{_{\alpha} \ \mathsf{tile}_{1} \ \Big| \ _{\beta} \ \mathsf{tile}_{2} \ _{\gamma} }$, which is equivalent to $\tiles{_{\alpha} \ \mathsf{tile}_{2} \circ \mathsf{tile}_{1} \ _{\gamma} }$, where $\circ$ denotes composition of functions. Multiple tiles can be connected to form pipelines with a single entry point and a final output, typically a Boolean value.

    Primitive tiles (e.g., $\mathsf{bind}$, $\mathsf{fold}$) define basic transformations over sequences, while composite tiles are obtained by connecting tiles. Constants are tiles without input that provide deterministic values.
    Each configuration can be implemented in \Soda, where tiles correspond to functions and their compositions, which are in a statically-typed and immutable setting.

    \subsection{The ${\cal C}_{MT}$ Action Language}

    ${\cal C}_{MT}$ \cite{brannstrom2026humanemotionverificationaction} consists of a set of symbols representing actions and fluents, forming the alphabet of the action language.
    Let \sAction be a non-empty set of actions and \sFluent be a non-empty set of fluents.

    The ${\cal C}_{MT}$ domain description language $D^{MT}(\sAction, \sFluent)$
    consists of static and dynamic causal laws, defining how actions and state conditions constrain or influence state evolution in so-called trajectories of the form $\langle s_{0}, A_{1}, s_{1}, A_{2},$ $\dots,$ $A_{n}, s_{n} \rangle$ of $D^{MT}(\bf A, F)$, where $A_{i} \subseteq A$ is a set of actions and $s_{i}$ is a state of $D^{MT}(\bf A, F)$. These laws are of the following form:

    \begin{tabular}{ll}
        $(a \; \mathbf{causes} \; f_{1},\dots,f_{n} \; \mathbf{if} \; g_{1}, \dots, g_{m})$ & $(1)$ \\
        $(f_{1},\dots,f_{n} \; \mathbf{if} \; g_{1}, \dots, g_{m})$ & $(2)$  \\
        $(f_{1},\dots,f_{n} \; \mathbf{triggers} \; a)$ & $(3)$  \\
        $(f_{1},\dots,f_{n} \; \mathbf{allows} \; a)$ & $(4)$  \\
        $(f_{1},\dots,f_{n} \; \mathbf{inhibits} \; a)$ & $(5)$  \\
        $(\mathbf{no\text{-}concurrency} \; a_{1}, \dots, a_{n})$ & $(6)$  \\
        $(\mathbf{default} \; g)$ & $(7)$ \\[4pt]

        $(a \; \mathbf{influences} \; f_{1}, \dots, f_{n} \; \mathbf{if} \; g_{1}, \dots, g_{m})$ & $(8)$ \\
        $(g_{1}, \dots, g_{m} \; \mathbf{influences} \; f_{1},\dots,f_{n})$ & $(9)$  \\
        $(g_{1}, \dots, g_{m} \; \mathbf{facilitates} \; a)$ & $(10)$ \\
        $(g_{1}, \dots, g_{m} \; \mathbf{contravenes} \; a)$ & $(11)$ \\
        $(g_{1},\dots,g_{m}\; \mathbf{forbids\text{-}to\text{-}cause} \; f_{1}, \dots, f_{n})$ & $(12)$
    \end{tabular}

    \noindent
    where
    $a \in \sAction$,
    $a_{i} \in \sAction$,
    $f_{j},g_{j} \in \sFluent$.

    The rule forms of ${\cal C}_{MT}$ serve as the source language for the verification framework introduced in the current work. Instead of defining operational semantics directly over these laws, we translate them into a uniform transition-verification model based on rules, transitions, and trajectories. The corresponding verification semantics is therefore presented in Section~\ref{sec:model}.

    \section{Model}
    \label{sec:model}

    In this section, we present the formal model underlying the framework for representing and verifying trajectories of states and actions.

    The model is based on fluent values, actions, rules, transitions, and trajectories. Fluent values represent properties of the system, actions represent events that may change these properties, and rules specify causal and normative relations between states and actions, including triggering, inhibition, facilitation, influence, and concurrency constraints.

    A transition represents a single evolution step of the system and consists of an input state, a set of executed actions, and a resulting output state. A trajectory is a finite sequence of such transitions represented through alternating states and action sets over time. The purpose of the model is to verify whether the transitions induced by a trajectory satisfy the rules of the domain.

    Our fluent representation differs from a tuple $\langle$fluent name, value name$\rangle$~\cite{gelfond1998action}, by partitioning the fluent values and directly identifying the fluents with their state, and to ensure a uniform and unambiguous representation, fluent values are assumed to be globally unique.
    For example, in the setting of appraisal theory of emotion, fluents such as `goal' and `need' may be included, each with possible values such as `high', `undecided', and `low'. The corresponding fluent values may then be represented as `goal-high', `goal-undecided', `goal-low', `need-high', `need-undecided', and `need-low'. In this way, each fluent value uniquely determines the fluent and its value.

    \begin{definition}[Fluent Values and Fluents]
        \normalfont
        \label{def:fluent-values-and-fluents}
        Let $\mathsf{FV}$ be a non-empty finite set of identifiers, then the elements $v \in \mathsf{FV}$ are called \emph{fluent values}.

        We say that $\mathsf{F}$ is a \emph{set of fluents} for $\mathsf{FV}$ if $\mathsf{F}$ is a partition for $\mathsf{FV}$. In that case, each element of $\mathsf{F}$ is called a \emph{fluent}. In other words, if $F_{1} \in \mathsf{F}$ and $F_{2} \in \mathsf{F}$, then either $F_{1} = F_{2}$ or $F_{1} \cap F_{2} = \emptyset$. Conversely, we say that $\mathsf{FV}$ is the \emph{set of fluent values} of $\mathsf{F}$.
        We say that a set of fluent values $I \subseteq \mathsf{FV}$ is \emph{consistent} with respect to $\mathsf{F}$, denoted $I \subseteqFV$, if and only if every fluent value in $I$ corresponds to a different fluent in $\mathsf{F}$. When it is clear from the context that $I$ needs to be consistent, we omit $\mathsf{F}$ in the notation.
    \end{definition}

    $\mathsf{FV}$ is the domain of atomic symbols used to represent states, and a state will be a subset of $\mathsf{FV}$.
    The model includes actions and rule identifiers. Actions represent discrete events that may occur during a transition, while rule identifiers specify the type of constraint or relation to be evaluated over states and actions.

    \begin{definition}[Actions and Rule Identifiers]
        \normalfont
        \label{def:actions-and-role-identifiers}
        Let $\mathsf{A}$ be a non-empty finite set of identifiers, then the elements $a \in \mathsf{A}$ are called \emph{actions}. Let $\mathsf{RI}$ be the set
        $\mathsf{RI} = \{ \mathbf{causes\text{-}if},$ $\mathbf{if\text{-}rule},$ $\mathbf{triggers},$ $\mathbf{allows},$ $\mathbf{inhibits},$ $\mathbf{no\text{-}concurrency},$ $\mathbf{default},$ $\mathbf{influences\text{-}if},$ $\mathbf{influences},$ $\mathbf{facilitates},$ $\mathbf{contravenes},$ $\mathbf{forbids\text{-}to\text{-}cause} \}$,
        then the elements $s \in \mathsf{RI}$ are called \emph{rule identifiers}.
    \end{definition}

    $\mathsf{A}$ defines the set of actions that may occur in transitions and $\mathsf{RI}$ defines the set of rule types. A rule identifier determines how the components of a rule are interpreted in the verification function. No semantics is attached at this level, because the semantics is defined below by $\varphi_{\mathsf{verify}}$.

    Rules define constraints on states and actions using a uniform, tuple-based representation. Each rule specifies a set of input conditions, a set of relevant actions, and a set of resulting conditions, together with a type that determines how the rule is interpreted during verification.

    \begin{definition}[Rule, Transition, and Trajectory]
        \normalfont
        \label{def:rules}
        Let $\mathsf{F}$ be a non-empty finite set of fluents, $\mathsf{FV}$ the set of fluent values for $\mathsf{F}$, $\mathsf{A}$ a non-empty finite set of actions, and $\mathsf{RI}$ the set of rule identifiers as defined above. A \emph{rule} is a tuple $\langle s, I, A, O \rangle$ such that $s \in \mathsf{RI}$, $I \subseteqFV$ and $O \subseteqFV$, and $A \subseteq \mathsf{A}$.
        A \emph{transition} is a tuple $t = \langle t_{I}, t_{A}, t_{O} \rangle$ such that $t_{I} \subseteqFV$, $t_{O} \subseteqFV$, and $t_{A} \subseteq \mathsf{A}$ is a finite set of actions.
        The set of transitions over $\mathsf{FV}$ and $\mathsf{A}$ is $\mathsf{T} = \{ \langle t_{I}, t_{A}, t_{O}\rangle \mid t_{I} \subseteqFV, t_{A} \subseteq \mathsf{A}, t_{O} \subseteqFV \}$.
        A \emph{trajectory} is a finite sequence $\mathcal{SA} = \langle s_{0}, A_{1}, s_{1}, \ldots , A_{n}, s_{n} \rangle$ such that $s_{i} \subseteqFV$ for all $0 \leq i \leq n$ and $A_{i} \subseteq \mathsf{A}$ for all $1 \leq i \leq n$.
        Considering the fluents and actions, we can define its rules as follows:

        \begin{tabular}{ll}
            $\langle \mathbf{causes\text{-}if}, \{ g_{1}, \dots, g_{m} \}, \{ a \}, \{f_{1},\dots,f_{n} \} \rangle$ & $(1)$ \\
            $\langle \mathbf{if}, \{ g_{1}, \dots, g_{m} \}, \emptyset , \{f_{1},\dots,f_{n} \} \rangle$ & $(2)$  \\
            $\langle \mathbf{triggers}, \{f_{1},\dots,f_{n} \}, \{a\}, \emptyset \rangle$ & $(3)$  \\
            $\langle \mathbf{allows}, \{ f_{1},\dots,f_{n} \} , \{ a \}, \emptyset \rangle$ & $(4)$  \\
            $\langle \mathbf{inhibits} , \{f_{1},\dots,f_{n} \}, \{ a\} , \emptyset \rangle$ & $(5)$  \\
            $\langle \mathbf{no\text{-}concurrency} , \emptyset , \{a_{1}, \dots, a_{n}\}, \emptyset \rangle $ & $(6)$  \\
            $\langle \mathbf{default} , \{  g \} , \emptyset , \emptyset\} \rangle$ & $(7)$ \\
            $\langle \mathbf{influences\text{-}if}, \{ g_{1}, \dots, g_{m}\}, \{a \}, \{f_{1}, \dots, f_{n} \} \rangle$ & $(8)$ \\
            $\langle \mathbf{influences}, \{g_{1}, \dots, g_{m} \}, \emptyset , \{ f_{1},\dots,f_{n} \} \rangle$ & $(9)$  \\
            $\langle \mathbf{facilitates}, \{g_{1}, \dots, g_{m} \}, \{ a\}, \emptyset \rangle$ & $(10)$ \\
            $\langle \mathbf{contravenes}, \{g_{1}, \dots, g_{m} \}, \{ a\}, \emptyset \rangle$ & $(11)$ \\
            $\langle \mathbf{forbids\text{-}to\text{-}cause}, \{g_{1},\dots,g_{m}\}, \emptyset , \{f_{1}, \dots, f_{n}\} \rangle$ & $(12)$
        \end{tabular}

    \end{definition}

    The tuple structure is the same for all rule types. Its meaning is not encoded in the tuple itself but is determined entirely by the rule identifier $s$ through the verification function.
    This setup makes the input conditions, relevant actions, and resulting effects explicit, while leaving their interpretation to the rule identifier.
    A transition captures a single step with an input state $t_{I}$, the set of actions that occur $t_{A}$, and the resulting state $t_{O}$. All components are sets, and no assumption is made about how $t_{O}$ is produced from $t_{I}$, because that is left to the rule verification. We collect such steps into finite state-action sequences, which are the objects verified by the pipeline.
    The verification function determines whether a transition satisfies a given rule.

    \begin{definition}[Verification Function]
        \normalfont
        \label{def:verification-function}
        Let $\mathsf{FV}$ be a non-empty finite set of fluent values, $\mathsf{A}$ a non-empty finite set of actions, $\mathsf{RI}$ the set of rule identifiers, $\mathsf{R}$ a set of rules over $\mathsf{FV}, \mathsf{A}, \mathsf{RI}$, and $\mathsf{T}$ a set of transitions over $\mathsf{FV}, \mathsf{A}$. We define
        $\varphi_{\mathsf{inhibit}} : \mathsf{T} \times \mathsf{R} \to \mathcal{P}(\mathsf{A})$,
        $\varphi_{\mathsf{inhibit\text{-}set}} : \mathsf{T} \times \mathcal{P}(\mathsf{R}) \to \mathcal{P}(\mathsf{A})$, and
        $\varphi_{\mathsf{verify}} : \mathsf{T} \times \mathsf{R} \times \mathcal{P}(\mathsf{A}) \to \mathbb{B}$ as follows.
        For $t = \langle t_{I}, t_{A}, t_{O} \rangle \in \mathsf{T}$, $r = \langle s, I, A, O \rangle \in \mathsf{R}$,
        \[
            \varphi_{\mathsf{inhibit}} (t, r) =
            \begin{cases}
                A & \text{if } (I \subseteq t_{I}) \text{ and } s = \mathbf{inhibits} \text{ or } s = \mathbf{contravenes} \\
                \emptyset & \text{otherwise} \\
            \end{cases}
        \]

        Analogously,
        \[
            \varphi_{\mathsf{inhibit\text{-}set}} (t, \mathsf{R}) = \bigcup \limits _{r \in \mathsf{R}} \varphi _{\mathsf{inhibit}} (t, r)
        \]

        Let $B \subseteq \mathsf{A}$ such that $B = \varphi _{\mathsf{inhibit\text{-}set}} (\mathsf{R})$, then

        \[
            \varphi_{\mathsf{verify}} (t, r, B) =
            \begin{cases}
                (I \subseteq t_{I})
                \land (\lvert A \rvert = 1) \land (A \subseteq t_{A}) \land (A \cap B = \emptyset) \Longrightarrow (O \subseteq t_{O}) \\
                \qquad \text{ if } s = \mathbf{causes\text{-}if}  \text{ or } s = \mathbf{influences\text{-}if} \\
                (I \subseteq t_{I}) \Longrightarrow (O \subseteq t_{I})  \\
                \qquad \text{ if } s = \mathbf{if\text{-}rule} \text{ or } s = \mathbf{influences} \\
                (I \subseteq t_{I}) \land (\lvert A \rvert = 1) \land (A \cap B = \emptyset) \Longrightarrow (A \subseteq t_{A}) \\
                \qquad \text{ if } s = \mathbf{triggers} \\
                \text{true} \\
                \qquad \text{ if } s = \mathbf{allows} \text{ or } s = \mathbf{facilitates} \\
                (I \subseteq t_{I}) \Longrightarrow (\lvert A \rvert = 1) \land (A \cap t_{A} = \emptyset) \\
                \qquad \text{ if } s = \mathbf{inhibits} \text{ or } s = \mathbf{contravenes} \\
                \Longrightarrow \lvert A \cap t_{A} \rvert \le 1 \\
                \qquad \text{ if } s = \mathbf{no\text{-}concurrency} \\
                \Longrightarrow (\lvert I \rvert = 1) \land (I \subseteq t_{I}) \\
                \qquad \text{ if } s = \mathbf{default} \\
                (I \subseteq t_{I}) \Longrightarrow (O \cap t_{O} = \emptyset) \\
                \qquad \text{ if } s = \mathbf{forbids\text{-}to\text{-}cause}
            \end{cases}
        \]
        Each rule has a precondition and a postcondition connected by an arrow ($\Longrightarrow$). When the precondition is empty, we put the postcondition directly after the arrow (as in $\mathbf{no\text{-}concurrency}$ and $\mathbf{default}$).

    \end{definition}

    The function $\varphi_{\mathsf{verify}}(t,r,B)$ evaluates whether a transition $t$ satisfies a rule $r$, given a set of inhibited actions $B$. The function is defined case-wise on the rule identifier $s$. Each case specifies a logical condition that must hold.
    Conditions involve checking whether $I \subseteq t_{I}$ (rule is applicable), checking whether $A$ is executed or not, and enforcing constraints on $t_{O}$. The output is Boolean, and no partial satisfaction or scoring is used.

    \begin{definition}[Rule Consistency]
        \normalfont
        \label{def:rule-consistency}
        If the precondition in a rule $r$ is satisfied, we say that $r$ is \emph{active}, and otherwise $r$ is \emph{passive}.
        We say that a transition $t$ is \emph{valid} for a rule $r$ if and only if either $t$ does not make $r$ active, or $t$ makes the rule active and its postcondition is satisfied. By contrast, if $t$ makes $r$ active and the postcondition is not satisfied, then $t$ is \emph{invalid} for $r$.
        We say that two rules $r_{1}$ and $r_{2}$ are \emph{in conflict} if and only if for every transition $t$ that makes $r_{1}$ and $r_{2}$ active, $t$ is valid for either $r_{1}$ or $r_{2}$ and invalid for the other one.
        A set of rules $R$ is \emph{consistent} if and only if each pair $r_{1}, r_{2} \in R$ is not in conflict.
    \end{definition}

    Let us see an example of two rules in conflict. Consider a fluent $f$ with fluent values $f_{1}$ and $f_{2}$ and the rules $\mathbf{default} \ f_{1}$ and $\mathbf{default} \ f_{2}$. We see that every transition $t$ makes the $\mathbf{default}$ rules active as they have no preconditions. Since a fluent cannot have two different values in the same state, $r_{1}$ and $r_{2}$ are in conflict. As criteria for rule conflict resolution is not part of the framework, we only consider set of rules that are consistent. Since no overriding is allowed and the rules are consistent, the $\mathbf{allows}$ and $\mathbf{facilitates}$ rules are not necessary.

    \begin{proposition}[Soundness]
        \normalfont
        \label{prop:soundness}
        $\varphi_{\mathsf{verify}}$ maps the semantics of ${\cal C}_{MT}$.
    \end{proposition}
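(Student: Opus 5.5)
The plan is to make the informal claim precise and then argue law by law. \emph{Soundness} will mean the following. Let $D$ be a ${\cal C}_{MT}$ domain description whose laws of forms (1)--(12) are translated by Definition~\ref{def:rules} into a consistent rule set $\mathsf{R}$ (Definition~\ref{def:rule-consistency}). Let $B = \varphi_{\mathsf{inhibit\text{-}set}}(t,\mathsf{R})$. Then for every transition $t = \langle s_{i-1}, A_{i}, s_{i}\rangle$ taken from a trajectory, $t$ respects each law of $D$ under the ${\cal C}_{MT}$ semantics exactly when $\varphi_{\mathsf{verify}}(t,r,B) = \text{true}$ for the corresponding rule $r$. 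A trajectory is then accepted by the pipeline iff all its transitions satisfy all rules, which is iff it is a trajectory of $D$ in the sense of~\cite{brannstrom2026humanemotionverificationaction}. The proof is a structural case analysis on the rule identifier $s \in \mathsf{RI}$. This is possible because both the translation of Definition~\ref{def:rules} and $\varphi_{\mathsf{verify}}$ are defined case-wise on $s$, and $\varphi_{\mathsf{verify}}$ evaluates a single rule at a time.

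First I will fix the reading of a ${\cal C}_{MT}$ state as a consistent set $s \subseteqFV$, using Definition~\ref{def:fluent-values-and-fluents}: a fluent $g$ holds in $s$ iff its fluent value is in $s$. Under this reading, a conjunctive condition $g_{1},\dots,g_{m}$ holding in $s$ is literally $\{g_{1},\dots,g_{m}\} \subseteq s$.

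Then I will go through the rule pairs.
\begin{itemize}
\item \textbf{Dynamic laws (1) and (8).} In ${\cal C}_{MT}$, if $a$ is executed, is not inhibited, and the $g_{j}$ hold before, then the $f_{i}$ hold after. This matches $(I \subseteq t_{I}) \land (A \subseteq t_{A}) \land (A\cap B=\emptyset) \Rightarrow O \subseteq t_{O}$; the $\lvert A\rvert=1$ clause is automatic by the translation.
\item \textbf{Static laws (2) and (9).} These are checked on a single state, which gives $I \subseteq t_{I} \Rightarrow O \subseteq t_{I}$. Applying this to every transition also covers the final state $s_{n}$, provided the pipeline appends a terminal transition. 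I will note this as a side assumption.
\item \textbf{(3) triggers.} A non-inhibited trigger forces $a \in t_{A}$.
\item \textbf{(5) and (11), inhibits and contravenes.} These forbid $a$ when the condition holds.
\item \textbf{(6) no-concurrency.} This bounds $\lvert A \cap t_{A}\rvert$ by $1$.
\item \textbf{(7) default.} This requires $g \in t_{I}$, in line with the exclusion of conflicting defaults in Definition~\ref{def:rule-consistency}.
\item \textbf{(12) forbids-to-cause.} This excludes the $f_{i}$ from $t_{O}$.
\end{itemize}
For each item I will prove both directions of the iff directly from the definitions.

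The two permissive forms, (4) allows and (10) facilitates, map to $\text{true}$. Here I will argue as in the remark after Definition~\ref{def:rule-consistency}. In ${\cal C}_{MT}$ their only effect is to override inhibition. Since the rule set is consistent and no overriding is permitted, they impose no constraint on a transition, so $\text{true}$ is the faithful image.

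The main obstacle is the interaction between inhibition and the other laws, carried by the set $B$. I have to show that computing $B$ once per transition as $\bigcup_{r}\varphi_{\mathsf{inhibit}}(t,r)$, which depends only on $t_{I}$, agrees with the ${\cal C}_{MT}$ notion of an action being blocked in the current state. The argument will be that inhibition conditions refer only to the pre-state and, without overriding, are monotone, so the union is exact. I would also make explicit the reading of the precedence in the definition of $\varphi_{\mathsf{inhibit}}$ as $(I\subseteq t_{I}) \land (s=\mathbf{inhibits} \lor s=\mathbf{contravenes})$. Finally, the soundness claim is relative to the law-wise ${\cal C}_{MT}$ semantics, not to default reasoning or minimal-change inertia beyond what is encoded in the rules. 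I will state that restriction explicitly rather than attempt to prove it away.
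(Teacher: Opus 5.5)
Your proposal is the paper's argument made explicit. Both reduce trajectory validity to checking every transition against every rule, take the correspondence between each ${\cal C}_{MT}$ law and its case of $\varphi_{\mathsf{verify}}$ as holding by construction, and use rule consistency to justify evaluating rules independently, which is also what disposes of $\mathbf{allows}$/$\mathbf{facilitates}$. The caveats you add are genuine and go beyond the paper's proof. Because $\varphi_{\mathsf{get\text{-}seq}}$ discards the pending last state, and the static-law and $\mathbf{default}$ cases inspect only $t_{I}$, the paper's pipeline never checks $s_{n}$ against those rules; your terminal-transition assumption is therefore a real modification of the pipeline rather than something you can derive, and your law-wise restriction is what the paper's ``by construction'' silently presupposes for the defeasible $\mathbf{default}$ law.
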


    \begin{proof}
        A trajectory in ${\cal C}_{MT}$ is valid if and only if every transition is valid.
        By construction, the function $\varphi_{\mathsf{verify}} (t, r, B)$ applies the rules defined in ${\cal C}_{MT}$ to every transition $t = \langle t_{I}, t_{A}, t_{O} \rangle \in \mathsf{T}$, rule $r$, and set of inhibited actions $B$. Since the set of rules is consistent, they can be applied independently. Thus, $\varphi_{\mathsf{verify}}$ verifies the validity of the whole trajectory.
        $\qed$
    \end{proof}

    The function $\varphi _{\mathsf{verify}}$ can be used to provide a detailed verification.

    \begin{definition}[Detailed Verification Function]
        \normalfont
        \label{def:detailed-verification-function}
        Let $\mathsf{T}$ be a set of transitions, $\mathsf{R}$ a set of rules, and $\mathsf{A}$ a set of actions. Let $\varphi_{\mathsf{verify\text{-}detailed}} : \mathsf{T} \times \mathsf{R} \times \mathcal{P}(\mathsf{A}) \to \mathsf{T} \times \mathsf{R} \times \mathcal{P}(\mathsf{A}) \times \mathbb{B}$ be defined as:
        $\varphi_{\mathsf{verify\text{-}detailed}} (t, r, B) = \langle t, r, B, \varphi_{\mathsf{verify}} (t, r, B) \rangle.$
    \end{definition}

    The function $\varphi_{\mathsf{verify\text{-}detailed}}$ lifts $\varphi_{\mathsf{verify}}$ by returning the transition $t$, the rule $r$, the inhibited actions $B$, and the Boolean result.
    This is a structural extension that preserves all inputs for downstream processing (e.g., mapping or aggregation), without changing the semantics of verification.

    Our framework allows defining properties between rules, including the rewriting of the $\mathbf{contravenes}$ rule, as shown below.

    \begin{proposition}[Properties of Contravenes]
        \normalfont
        \label{prop:contravenes-forbids-to-cause}
        Consider non-empty finite sets of fluent values $I, O$, an action $a$, consider the rules $r_{1} = \langle \mathbf{forbids\text{-}to\text{-}cause}, I, \emptyset, O \rangle$, $r_{2} = \langle \mathbf{causes\text{-}if}, I, \{a\}, O \rangle$, and $r_{3} = \langle \mathbf{contravenes}, I, \{a\}, \emptyset \rangle$. Consider that $a$ is not inhibited for $r_{1}$ and $r_{2}$. If a transition $t$ is valid for $r_{1}$ and for $r_{2}$, then it is valid for $r_{3}$, and if $t$ is invalid for $r_{1}$ and for $r_{2}$, then it is invalid for $r_{3}$.
    \end{proposition}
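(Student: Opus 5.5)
The plan is to unfold $\varphi_{\mathsf{verify}}$ from Definition~\ref{def:verification-function} for each of the three rules and then argue by cases on whether $I \subseteq t_{I}$. Validity is read as in Definition~\ref{def:rule-consistency}: a rule is valid exactly when its implication evaluates to true. For $t = \langle t_{I}, t_{A}, t_{O} \rangle$ and a set $B$ of inhibited actions with $a \notin B$ (this is how I read the hypothesis that $a$ is not inhibited), the three conditions are as follows. For $r_{1}$ it is $(I \subseteq t_{I}) \Longrightarrow (O \cap t_{O} = \emptyset)$. For $r_{2}$ it is $(I \subseteq t_{I}) \land (a \in t_{A}) \Longrightarrow (O \subseteq t_{O})$, because $\lvert\{a\}\rvert = 1$ and $\{a\} \cap B = \emptyset$ hold trivially. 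For $r_{3}$ it is $(I \subseteq t_{I}) \Longrightarrow (a \notin t_{A})$, since again $\lvert\{a\}\rvert = 1$. I would first record these three simplified forms explicitly.

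For the first claim, assume $t$ is valid for $r_{1}$ and $r_{2}$. If $I \not\subseteq t_{I}$, then $r_{3}$ is passive and hence valid. Otherwise, validity of $r_{1}$ gives $O \cap t_{O} = \emptyset$. This is the key step, and it is where the hypothesis that $O$ is non-empty enters: $O \neq \emptyset$ together with $O \cap t_{O} = \emptyset$ yields $O \not\subseteq t_{O}$. Validity of $r_{2}$ then forces its precondition to fail. Since $I \subseteq t_{I}$ holds, this means $a \notin t_{A}$, which is exactly the postcondition of $r_{3}$.

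For the second claim, assume $t$ is invalid for both rules. Invalidity of $r_{2}$ means $r_{2}$ is active with a failing postcondition, so in particular $I \subseteq t_{I}$ and $a \in t_{A}$. Then $r_{3}$ is active and its postcondition $a \notin t_{A}$ fails, so $t$ is invalid for $r_{3}$. In fact only the invalidity of $r_{2}$ is needed here, and I would remark on that.

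The main obstacle is not computational but interpretive. $r_{3}$ is itself a $\mathbf{contravenes}$ rule, so $\varphi_{\mathsf{inhibit}}$ would place $a$ in $B$ whenever $I \subseteq t_{I}$. The hypothesis ``$a$ is not inhibited for $r_{1}$ and $r_{2}$'' must therefore be read as evaluating $r_{1}$ and $r_{2}$ with a set $B$ not containing $a$, i.e.\ independently of $r_{3}$, and I would state this reading explicitly before the case analysis. A second point to state explicitly is that $O \neq \emptyset$ is essential: if $O = \emptyset$, then $r_{2}$ would be trivially valid and the first implication would fail.
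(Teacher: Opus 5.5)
Your proposal is correct and follows essentially the same route as the paper. You split on whether $I \subseteq t_{I}$, use the non-emptiness of $O$ together with $O \cap t_{O} = \emptyset$ to rule out an active $r_{2}$ (the paper argues this by a separate contradiction, you by modus tollens on $r_{2}$), and for the second claim read $a \in t_{A}$ off the active-but-failing $r_{2}$, reading the non-inhibition hypothesis as $a \notin B$ just as the paper's proof does.
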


    \begin{proof}
        See the proof in Appendix A.
    \end{proof}

    Intuitively, Proposition~\ref{prop:contravenes-forbids-to-cause} shows that a $\mathbf{contravenes}$ rule can be replaced with the rules $\mathbf{forbids\text{-}to\text{-}cause}$ and $\mathbf{causes\text{-}if}$. That can be done by having a new fluent $g$ with a fluent value $g_{0}$ which must not occur in the rules nor in the transitions. After that, the transitions need to be updated with the following function $f_{I,O} : \mathsf{T} \to \mathsf{T}$,
    \[
        f_{I,O}(\langle t_{I}, t_{A}, t_{O} \rangle) =
        \begin{cases}
            \langle t_{I}, t_{A}, t_{O} \cup \{g_{0}\} \rangle & \text{if } I \subseteq t_{I} \\
            \langle t_{I}, t_{A}, t_{O} \rangle & \text{otherwise} \\
        \end{cases}
    \]

    The core semantic components are collected in the following transition verification model.

    \begin{definition}[Transition Verification Model]
        \normalfont
        \label{def:transition-verification-model}
        A \emph{transition verification model} is a tuple $\mathcal{M} = \langle \mathsf{F}, \mathsf{A}, \mathsf{RI}, \mathsf{R}, \varphi_{\mathsf{verify}} \rangle,$ where $\mathsf{F}$ is a non-empty finite set of fluents, $\mathsf{FV}$ is the set of fluent values for $\mathsf{F}$, $\mathsf{A}$ is a non-empty finite set of actions, $\mathsf{RI}$ is a non-empty finite set of rule identifiers as in Definition~\ref{def:actions-and-role-identifiers}, $\mathsf{R}$ is a non-empty finite set of rules, and $\varphi_{\mathsf{verify}}$ the verification function for $\mathsf{RI}$ as in Definition~\ref{def:verification-function}.
    \end{definition}

    In order to extract transitions from a trajectory, we introduce an auxiliary structure that incrementally collects elements and assembles them into triples.

    \begin{definition}[Sliding Window for Transitions]
        \normalfont
        \label{def:sliding-window}
        Consider the following trajectory $\mathcal{SA} = \langle s_{0},A_{1},s_{1},\ldots,A_{n},s_{n}\rangle$ such that $s_{i} \subseteq \mathsf{FV}$ for all $0 \leq i \leq n$ and $A_i \subseteq \mathsf{A}$ for all $1 \leq i \leq n$. A \emph{sliding window} is a tuple $w = \langle s,a,S\rangle$ where $s \subseteq \mathcal{P}(\mathsf{FV})$, $a \subseteq \mathcal{P}(\mathsf{A})$, $\lvert s\rvert \leq 1$, $\lvert a\rvert \leq 1$, and $S$ is a finite sequence of transitions. We denote $\mathsf{W}$ as the set of all such sliding windows.
    \end{definition}

    The sliding window stores partial information about a transition, consisting of at most one state component, at most one action component, and a sequence of completed transitions.

    We define a function that updates the sliding window by consuming elements of a trajectory and constructing transitions when sufficient information is available.

    \begin{definition}[Transition Construction Function]
        \normalfont
        \label{def:transition-construction-function}
        Let $\mathsf{W}$ be the set of sliding windows, and let $\mathsf{A}$ and $\mathsf{FV}$ be the sets of actions and fluent values, respectively. Let
        $\varphi_{\mathsf{next}} : \mathsf{W} \times \mathcal{P}(\mathsf{A}\cup \mathsf{FV}) \to \mathsf{W}$
        be a function defined as follows:
        \[
            \varphi_{\mathsf{next}}(\langle s, a, S\rangle, q) =
            \begin{cases}
                \langle q,\, a,\, S \rangle, & \text{ if } s = \emptyset, \\

                \langle s,\, q,\, S\rangle, & \text{ if } s \neq \emptyset \text{ and } a = \emptyset, \\

                \langle q,\, \emptyset,\, S \mathbin{\|} \langle s,a,q \rangle \rangle,
                & \text{ if } s \neq \emptyset \text{ and } a \neq \emptyset .
            \end{cases}
        \]
    \end{definition}

    The function $\varphi_{\mathsf{next}}$ processes one element at a time, either extending the current partial information or completing a transition and appending it to the sequence.
    To initialize the construction, we define an initial sliding window.

    \begin{definition}[Initial Sliding Window]
        \normalfont
        \label{def:initial-sliding-window}
        Let $\mathsf{W}$ be the set of sliding windows, then the initial sliding window $z_{\mathsf{init}} \in \mathsf{W}$ is defined as
        %\[
        $z_{\mathsf{init}} = \langle \emptyset, \emptyset, [~] \rangle.$
        %\]
    \end{definition}

    The initial sliding window provides the starting configuration for constructing transitions from a trajectory. Starting from $z_{\mathsf{init}}$, the function $\varphi_{\mathsf{next}}$ is applied sequentially to each trajectory element, incrementally producing a sequence of transitions.

    \begin{definition}[Sliding Window Sequence Construction]
        \normalfont
        \label{def:sliding-window-sequence-construction}
        Let $\mathcal{SA}$ $=$ $\langle t_{1},$ $\dots,$ $t_{m} \rangle$ be a sequence such that $t_k \subseteq \mathsf{FV}$ or $t_{k} \subseteq \mathsf{A}$ for all $1 \leq k \leq m$. Let $z_{\mathsf{init}} \in \mathsf{W}$ be the initial sliding window, and let $\varphi_{\mathsf{next}}$ be the transition construction function. The sequence $(w_{k})_{k=0}^{m}$ is defined by $w_{0} = z_{\mathsf{init}}$ and $w_{k} = \varphi_{\mathsf{next}}(w_{k-1}, t_{k})$ for all $1 \leq k \leq m$.
    \end{definition}

    This defines an inductive construction where each element of the trajectory is incorporated into the sliding window, producing a sequence of intermediate windows.
    Finally, we extract the sequence of constructed transitions from the sliding window, yielding the sequence of transitions corresponding to the original trajectory, which is subsequently used for rule verification.

    \begin{definition}[Transition Sequence Extraction]
        \normalfont
        \label{def:transition-sequence-extraction}
        Let $\varphi_{\mathsf{get\text{-}seq}}$ be the projection that maps a sliding window
        $\langle s, a, S \rangle$ to the sequence of transitions $S$.
    \end{definition}

    The above definitions specify a transformation from a trajectory to a sequence of transitions. This transformation is defined independently of any implementation and serves as the formal basis for the verification procedure.

    \begin{figure*}[ht!]
        \centering
        \begin{tikzpicture}[x=1mm, y=1mm, box/.style={rectangle, draw, rounded corners=2mm, minimum width=4mm, minimum height=8mm, align=center}]
            \node[box] (S) {};

            \node[box, right=6mm of S] (trajectory) {$\makecell[l]{\ \mathsf{trajectory} \, \\ \ } \Big| \ \makecell[l]{\mathsf{\qquad fold \ \mathnormal{z}_{init} \ using \ \varphi_{\mathsf{next}} } \  \\ \mathsf{ \ _{(t)} } } \Big| \makecell{\ \ \mathsf{apply} \ \varphi _{\mathsf{get\text{-}seq}} \\ \mathsf{\ _{\langle (s_{1}), (a_{1}), (\langle s_{2}, a_{2}, s_{3} \rangle ) \rangle} \qquad \ _{(\langle s_{2}, a_{2}, s_{3} \rangle )} } }  $};

            \node[box, below=8mm of S, xshift=18mm] (rules) {$\makecell{\ \mathsf{rules}\\ \qquad \qquad  \mathsf{ \ _{(r_{1})}}}$};

            \node[box, below=24mm of S, xshift=62mm] (preprocessor) {$\mathsf{\makecell{\qquad \mathsf{map \ \varphi _{inhibit\text{-}set}} \\ \mathsf{_{(\langle s_{2}, a_{2}, s_{3} \rangle ) ; (r_{1}) } \qquad } } \ \Big| \ \makecell{\qquad \mathsf{cross} \\ \mathsf{_{(\langle \langle s_{2}, a_{2}, s_{3} \rangle, (a_{3}) \rangle ), (r_{1})} \qquad } } \Big| \ \makecell{\mathsf{map} \ \varphi _{\mathsf{verify\text{-}detailed}} \\ \mathsf{_{(\langle \langle \langle s_{4}, a_{5}, s_{5} \rangle , (a_{6}) \rangle, r_{2} \rangle )} \qquad _{(\langle \langle s_{4}, a_{5}, s_{5} \rangle , (a_{6}) , r_{2} , b_{1} \rangle )}} }}$};

            \node[box, below=40mm of S, xshift=62mm] (postpreprocessor) {$\makecell{\mathsf{map} \ \varphi _{\mathsf{get\text{-}b}} \\ \mathsf{\ _{(\langle \langle s_{4}, a_{5}, s_{5} \rangle , (a_{6}) , r_{2} , b_{1} \rangle )} \qquad } } \ \Big| \ \makecell{\mathsf{fold \ \mathnormal{true} \ using \ \varphi _{\mathsf{all\text{-}true}}} \\ \mathsf{ \ _{(b_{1})} \qquad \qquad \qquad \qquad \qquad \ _{b_{2}}} }$};

            \draw [->] (S.east) -- (trajectory.west);
            \draw [->] (S.east) -- ++(2mm, 0mm) |- (rules.west);
            \draw [->] (rules.east) -- ++(3mm, 0mm) -- ++(0mm, -7mm) -- ++(-34mm, 0mm) |- (preprocessor.west);
            \draw [->] (trajectory.east) -- ++(3mm, 0mm) -- ++(0mm, -26mm) -- ++(-110mm, 0mm) |- (preprocessor.west);
            \draw [->] (preprocessor.east) -- ++(3mm, 0mm) -- ++(0mm, -8mm) -- ++(-105mm, 0mm) |- (postpreprocessor.west);

            \draw [thick] (10,-45) circle (1.3);
            \draw [thick] (11,-46) -- (12,-47);
            \draw [->, dotted] (18,-45) -- (13, -45);

        \end{tikzpicture}
        \caption{Pipeline to check a trajectory.}
        \label{fig:pipeline-trajectory}
    \end{figure*}

    In Figure~\ref{fig:pipeline-trajectory}, we present the compositional pipeline induced by the transformation defined above, where a trajectory is processed sequentially to construct a sequence of transitions, which are subsequently combined with rules to perform verification. The pipeline is structured as a sequence of typed transformations over trajectories, transitions, and rules, where each tile corresponds to a formally defined operation. It operates over Boolean values $\mathsf{b}$, sets of fluent values $\mathsf{s}$, sets of actions $\mathsf{a}$, and rules $\mathsf{r}$; elements $\mathsf{t}$ denote either sets of fluent values or sets of actions, while $(\cdot)$ and $\langle \cdot , \ldots , \cdot \rangle$ denote sequences and tuples, respectively.
    The core tiles in the pipeline correspond to the following operations:

    \begin{tabular}{p{0.48\linewidth} p{0.48\linewidth}}
        $\tiles{\makecell{\mathsf{trajectory} \\ \qquad \qquad \qquad \mathsf{_{(t)} }}}$
        &
        is a constant tile that provides all the elements in a trajectory;
        \\[6pt]

        $\tiles{\makecell{\mathsf{fold \ \mathnormal{z}_{init} \ using \ \varphi_{\mathsf{next}}  } \\ \mathsf{\ _{(t)} \qquad \qquad _{\langle (s_{1}) , (a_{1}), (\langle s_{2}, a_{2}, s_{3} \rangle ) \rangle } }} }$
        &
        is a $\mathsf{fold}$ tile that takes the elements $t$ in a trajectory and builds a sequence of transitions inside a sliding window,
        \\[6pt]

        $\tiles{\makecell{\mathsf{apply} \ \varphi_{\mathsf{get\text{-}seq}} \\ \mathsf{_{\langle (s_{1}), (a_{1}), (\langle s_{2}, a_{2}, s_{3} \rangle ) \rangle} \ \qquad \ _{(\langle s_{2}, a_{2}, s_{3} \rangle )} } } }$
        &
        is an $\mathsf{apply}$ tile that retrieves the sequence of transitions;
        \\[6pt]

        $\tiles{\makecell{\mathsf{rules} \\ \qquad \qquad \mathsf{\ _{(r)}}}}$
        &
        is a constant tile that provides all the rules to be applied;
        \\[6pt]

        $\tiles{\makecell{\qquad \mathsf{map \ \varphi _{inhibit\text{-}set}} \\ \mathsf{_{(\langle \langle s_{1}, a_{1}, s_{2} \rangle); (r)} \qquad _{(\langle \langle s_{1}, a_{1}, s_{2} \rangle, (a_{2})\rangle); (r)} } } }$
        &
        is a $\mathsf{map}$ tile that computes the prepreprocessing of the transitions, collecting inhibited actions, i.e. the sequence $\mathsf{(a_{2})}$ contains all the actions from $\mathbf{inhibits}$ or $\mathbf{contravenes}$ rules, the sequence $\mathsf{(r)}$ is used by $\varphi _{\mathsf{inhibit\text{-}set}}$, but remains as a constant for the tile;
        \\[6pt]

        $\tiles{\makecell{\mathsf{cross} \\ \mathsf{_{(\alpha_{1}) , (\beta_{1})} \qquad  _{(\langle \alpha_{2}, \beta_{2} \rangle)}}}}$
        &
        is a tile that computes the cross product between the elements of the sequences $(\alpha_{1})$ and $(\beta_{1})$;
        \\[6pt]

        $\tiles{\makecell{\mathsf{map} \ \varphi _{\mathsf{verify\text{-}detailed}} \\ \mathsf{_{(\langle \langle \langle s_{1}, a_{1}, s_{2} \rangle ,  (a_{2}) \rangle , r \rangle )} \qquad \! \! _{(\langle \langle s_{1}, a_{1}, s_{2} \rangle , (a_{2}) , r , b \rangle )}}}}$
        &
        is a $\mathsf{map}$ tile that verifies each pair transition-rule considering the inhibited actions. %, where the 
        Input is preserved in the output to allow for a detailed report;
        \\
    \end{tabular}

    The pipeline outputs an array of Boolean as output indicating which transitions are violated. If only a yes/no answer is needed, it is possible to add a tile that aggregates the whole sequence to verify that all values are true.

    \begin{tabular}{p{0.48\linewidth} p{0.48\linewidth}}
        $\tiles{\makecell {\mathsf{map \ \varphi _{\mathsf{get\text{-}b}}}  \\ \mathsf{ \ _{(\langle \langle s_{1}, a_{1}, s_{2} \rangle , (a_{2}) , r , b \rangle )} \ \qquad \ _{(b)}}}}$
        &
        is a $\mathsf{map}$ tile that applies $\varphi _{\mathsf{get\text{-}b}}$, where $\varphi _{\mathsf{get\text{-}b}}$ is a function that retrieves the last element of the tuple;
        \\[6pt]

        $\tiles{\makecell{\mathsf{fold \ \mathnormal{true} \ using \ \varphi _{\mathsf{all\text{-}true}}} \\ \mathsf{ \ _{(b_{1})} \qquad \qquad \qquad \qquad \qquad \ _{b_{2}}}}}$
        &
        is a $\mathsf{fold}$ tile that checks whether all the elements in the sequence are true, where $\varphi _{\mathsf{all\text{-}true}} : \sBoolean \times \sBoolean \to \sBoolean$ is defined as:
        %\[
        $\varphi _{\mathsf{all\text{-}true}} (b_{1} , b_{2}) = b_{1} \land b_{2} \, .$
        %\]
        \\
    \end{tabular}

    The pipeline thus provides an executable realization of the previously defined verification functions. Its compositional structure ensures that the verification procedure remains semantically grounded, while supporting modular reuse and extensibility of individual components without compromising correctness.

    We denote by $\Pi_{\mathsf{traj}}$ the tile composition displayed in Figure~\ref{fig:pipeline-trajectory}.

    \begin{proposition}[Termination]
        \normalfont
        \label{prop:pipeline-termination}
        Let $\mathcal{SA} = \langle s_{0},A_{1},s_{1},\ldots,A_{n},s_{n}\rangle$ be a finite trajectory and let $\mathsf{R}_{D}\subseteq\mathsf{R}$ be a finite set of rules. Then the following holds:

        \begin{enumerate}
            \item $\Pi_{\mathsf{traj}}(\mathcal{SA},\mathsf{R}_{D})$ is well-defined and returns a finite sequence of Boolean values.

            \item The composition obtained by appending the tile $\tiles{\mathsf{fold}\ \mathsf{true}\ \mathsf{using}\ \varphi_{\mathsf{all\text{-}true}}}$ to $\Pi_{\mathsf{traj}}$ returns a Boolean value.
        \end{enumerate}
    \end{proposition}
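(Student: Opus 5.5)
The plan is to follow the pipeline $\Pi_{\mathsf{traj}}$ of Figure~\ref{fig:pipeline-trajectory} tile by tile. At each tile we check two things: the tile is a total function on the type of its input, and when it receives finite data it produces finite data. Termination then follows by induction on the number of composed tiles. Every tile is a total function, and composing total functions gives a total function, so the composite is well-defined as soon as each stage is. The only tiles that involve iteration are $\mathsf{fold}$, $\mathsf{map}$ and $\mathsf{cross}$. Each of these performs a number of steps bounded by the lengths of its input sequences, and each individual step evaluates a non-recursive, case-defined function on finite sets.

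The steps, in order, are as follows.
\begin{enumerate}
\item The constant tiles provide finite sequences. The tile $\mathsf{trajectory}$ provides the $2n+1$ elements of $\mathcal{SA}$, and $\mathsf{rules}$ provides an enumeration of the finite set $\mathsf{R}_{D}$.
\item The tile $\mathsf{fold}\ z_{\mathsf{init}}\ \mathsf{using}\ \varphi_{\mathsf{next}}$ is exactly the construction of Definition~\ref{def:sliding-window-sequence-construction}. It applies $\varphi_{\mathsf{next}}$ exactly $2n+1$ times. $\varphi_{\mathsf{next}}$ is total because its three cases are exhaustive and mutually exclusive in $s$ and $a$. By induction on $k$, each $w_{k}$ is a sliding window whose sequence $S$ has at most $k$ transitions. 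Hence $\varphi_{\mathsf{get\text{-}seq}}$ returns a finite sequence of transitions. No claim about its exact length is needed for termination.
\item The tile $\mathsf{map}\ \varphi_{\mathsf{inhibit\text{-}set}}$ evaluates, for each transition, a finite union over $\mathsf{R}_{D}$ of values of $\varphi_{\mathsf{inhibit}}$. Each such value is a finite subset of the finite set $\mathsf{A}$, so each evaluation terminates.
\item The tile $\mathsf{cross}$ produces a sequence whose length is the product of the two finite input lengths.
\item The tile $\mathsf{map}\ \varphi_{\mathsf{verify\text{-}detailed}}$ applies $\varphi_{\mathsf{verify}}$ once per pair. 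By Definition~\ref{def:verification-function}, $\varphi_{\mathsf{verify}}$ is a Boolean combination of subset tests, cardinality tests and intersection tests on finite sets. Every rule identifier in $\mathsf{RI}$ is covered by one case, so $\varphi_{\mathsf{verify}}$ is total and decidable. By Definition~\ref{def:detailed-verification-function}, each output is the input tuple together with the resulting Boolean.
\item $\mathsf{map}\ \varphi_{\mathsf{get\text{-}b}}$ projects each tuple onto its last component and therefore yields a finite sequence in $\mathbb{B}$. This proves item~1.
\item For item~2, $\mathsf{fold}\ \mathit{true}\ \mathsf{using}\ \varphi_{\mathsf{all\text{-}true}}$ iterates the total binary conjunction over a finite Boolean sequence, starting from $\mathit{true}$. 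It returns a single Boolean, which is $\mathit{true}$ when the sequence is empty.
\end{enumerate}

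The step I expect to need the most care is the $\mathsf{fold}$ with $\varphi_{\mathsf{next}}$. Its input elements $t_{k}$ are untyped subsets of $\mathsf{A}\cup\mathsf{FV}$, and the window components are represented as sets of cardinality at most one. We therefore have to check that every case of $\varphi_{\mathsf{next}}$ again produces an element of $\mathsf{W}$. We also have to check that a trailing partial window, such as the final state $s_{n}$ left in the $s$ slot, causes no failure; it is simply discarded by $\varphi_{\mathsf{get\text{-}seq}}$. I would handle this by a short induction on $k$ proving the invariant $w_{k}\in\mathsf{W}$. A secondary point is that $\varphi_{\mathsf{inhibit\text{-}set}}$ is applied with the full rule list held constant. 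Its well-definedness needs only that $\mathsf{R}_{D}$ is finite, which is an assumption of the proposition.
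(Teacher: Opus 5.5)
Your proposal is correct and follows essentially the same route as the paper's proof: finitely many applications of $\varphi_{\mathsf{next}}$ in the first fold, a finite output from $\mathsf{cross}$ because $\mathsf{R}_{D}$ is finite, total functions mapped over finite sequences, and a final fold over a finite Boolean sequence. Your version additionally makes explicit the totality checks that the paper takes for granted, namely the exhaustive cases of $\varphi_{\mathsf{next}}$ and the fact that all twelve identifiers of $\mathsf{RI}$ are covered by $\varphi_{\mathsf{verify}}$; one caution, though: the invariant $w_{k} \in \mathsf{W}$ you plan to prove fails if read literally, because $\varphi_{\mathsf{next}}$ stores $q$ itself (e.g.\ the two-element state $\{\mathit{informed}, \mathit{conflicted}\}$) in a slot that $\mathsf{W}$ requires to be a subset of $\mathcal{P}(\mathsf{FV})$ of cardinality at most one, so you must run that induction on the wrapped reading $\langle \{q\}, a, S \rangle$ suggested by the types $\langle (s_{1}), (a_{1}), \ldots \rangle$ in Figure~\ref{fig:pipeline-trajectory} (which also stops the test $s = \emptyset$ from confusing an empty slot with an empty state or empty action set), although termination itself does not depend on it.
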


    \begin{proof}
        Since $\mathcal{SA}$ is finite, $\tiles{\mathsf{fold}\ z_{\mathsf{init}}\ \mathsf{using}\ \varphi_{\mathsf{next}}}$ performs finitely many applications of $\varphi_{\mathsf{next}}$ and therefore terminates. Hence, $\tiles{\mathsf{apply}\ \varphi_{\mathsf{get\text{-}seq}}}$ returns a finite transition sequence.
        Since $\mathsf{R}_{D}$ is finite, $\tiles{\mathsf{cross}}$ returns a finite sequence of transition-rule pairs. The tiles $\tiles{\mathsf{map}\ \varphi_{\mathsf{inhibit}}}$, $\tiles{\mathsf{map}\ \varphi_{\mathsf{verify\text{-}detailed}}}$, and $\tiles{\mathsf{map}\ \varphi_{\mathsf{get\text{-}b}}}$ apply total functions to finite sequences and therefore terminate. Consequently, $\Pi_{\mathsf{traj}}(\mathcal{SA},\mathsf{R}_{D})$ is well-defined and returns a finite sequence of Boolean values.
        For the aggregated case, $\tiles{\mathsf{fold}\ \mathsf{true}\ \mathsf{using}\ \varphi_{\mathsf{all\text{-}true}}}$ performs finitely many applications of $\varphi_{\mathsf{all\text{-}true}}$ over this finite Boolean sequence. Therefore, the resulting composition returns a Boolean value.
        $\qed$
    \end{proof}

    One of the advantages of the \Tiles framework is the possibility of computing complexity by looking at the pipeline.

    \begin{proposition}[Complexity]
        \normalfont
        \label{prop:complexity}
        If $r$ is the number of rules, $t$ is the number of states and sets of actions in a trajectory, and $d$ is a bound for the number of steps to evaluate $\varphi _{\mathsf{verify\text{-}detailed}}$, then the complexity is $\mathcal{O}(r \times t \times d)$.
    \end{proposition}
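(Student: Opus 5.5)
The plan is to walk through the pipeline $\Pi_{\mathsf{traj}}$ of Figure~\ref{fig:pipeline-trajectory} tile by tile. For each tile we bound its cost in terms of the length of its input sequence. Because tiles are composed sequentially, the total cost is the sum of the tile costs, and the bound is whichever term dominates. Throughout we keep the convention of the statement: steps that only manipulate sets of bounded size (unions, subset tests, and the like) count as $\mathcal{O}(1)$, except inside $\varphi_{\mathsf{verify\text{-}detailed}}$, which is charged $d$.

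We treat the front part of the pipeline first.
\begin{enumerate}
  \item The $\mathsf{trajectory}$ and $\mathsf{rules}$ constant tiles output sequences of length $t$ and $r$, respectively.
  \item The tile $\mathsf{fold}\ z_{\mathsf{init}}\ \mathsf{using}\ \varphi_{\mathsf{next}}$ applies $\varphi_{\mathsf{next}}$ exactly $t$ times (Definition~\ref{def:sliding-window-sequence-construction}). Each application is a case distinction plus at most one append, so this costs $\mathcal{O}(t)$.
  \item By an easy induction on $k$ using Definition~\ref{def:transition-construction-function}, the window sequence after consuming the whole trajectory holds at most $t$ transitions; in fact exactly $(t-1)/2$ of them. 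Applying $\varphi_{\mathsf{get\text{-}seq}}$ is a projection, costing $\mathcal{O}(1)$.
\end{enumerate}

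Next comes the tile $\mathsf{map}\ \varphi_{\mathsf{inhibit\text{-}set}}$. For each of the at most $t$ transitions it takes a union over the $r$ rules of $\varphi_{\mathsf{inhibit}}$, and each $\varphi_{\mathsf{inhibit}}$ is a subset test plus an identifier comparison. This tile therefore costs $\mathcal{O}(r \times t)$, which is absorbed into $\mathcal{O}(r\times t\times d)$ as long as $d \ge 1$.

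The $\mathsf{cross}$ tile produces at most $r \times t$ pairs, and building them costs $\mathcal{O}(r\times t)$. The tile $\mathsf{map}\ \varphi_{\mathsf{verify\text{-}detailed}}$ then evaluates $\varphi_{\mathsf{verify\text{-}detailed}}$ once per pair. By the hypothesis on $d$ this costs $\mathcal{O}(r\times t\times d)$; this is the dominant term. The closing tiles $\mathsf{map}\ \varphi_{\mathsf{get\text{-}b}}$ and $\mathsf{fold}\ \mathit{true}\ \mathsf{using}\ \varphi_{\mathsf{all\text{-}true}}$ are linear in the $r\times t$ outputs. Summing all the contributions gives $\mathcal{O}(t + r\times t + r\times t\times d) = \mathcal{O}(r\times t\times d)$.

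The main obstacle is the $\mathsf{map}\ \varphi_{\mathsf{inhibit\text{-}set}}$ step. Its cost depends on how set operations are accounted for: the unions of action sets have size bounded by $\lvert\mathsf{A}\rvert$, which is fixed for the model $\mathcal{M}$. I would handle this by stating explicitly that $\mathsf{FV}$ and $\mathsf{A}$ are fixed finite sets, so their operations are constant-time, and that $d \geq 1$. Alternatively, one can observe that computing the inhibited actions of a transition is no more expensive than $r$ evaluations of a subset test. A subset test is itself part of $\varphi_{\mathsf{verify}}$ and so is bounded by $d$, which places this step under the dominant term as well.
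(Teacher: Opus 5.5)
Your proof is correct and follows the same route as the paper: a tile-by-tile walk through Figure~\ref{fig:pipeline-trajectory}, where the trajectory contributes $\mathcal{O}(t)$ through the fold and apply tiles, the $\mathsf{cross}$ tile yields $\mathcal{O}(r \times t)$ pairs, and mapping $\varphi_{\mathsf{verify\text{-}detailed}}$ over them gives the dominant $\mathcal{O}(r \times t \times d)$ term, which persists to the end of the pipeline. Your version is more careful than the paper's, which does not account for the $\mathsf{map}\ \varphi_{\mathsf{inhibit\text{-}set}}$ tile at all; you bound that tile by $\mathcal{O}(r \times t)$ and state the assumptions needed to absorb it (constant-time operations on the fixed finite sets $\mathsf{FV}$ and $\mathsf{A}$, and $d \geq 1$).
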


    \begin{proof}
        Visiting each tile in Figure~\ref{fig:pipeline-trajectory}, we see that $\mathsf{trajectory}$ is $\mathcal{O}(t)$, which is kept through $\mathsf{fold}$ and $\mathsf{apply}$, while $\mathsf{rules}$ is $\mathcal{O}(r)$. When the elements of $\mathsf{rules}$ and $\mathsf{trajectory}$ combine in $\mathsf{cross}$, the complexity is $\mathcal{O}(r \times t)$. Since, $d$ is a bound for $\varphi _{\mathsf{verify\text{-}detailed}}$, the complexity of $\mathsf{map} \ \varphi _{\mathsf{verify\text{-}detailed}}$ is $\mathcal{O}(r \times t \times d)$, which is kept up to the end of the pipeline.
        $\qed$
    \end{proof}

    We show the operationalization of this framework, especially how domain descriptions and trajectories can be specified in a concrete format and executed within the system.

    \section{Operationalization}
    \label{sec:operationalization}

    To provide an executable representation of the formal model introduced in Section~\ref{sec:model}, we operationalize its structures through YAML specifications. A YAML specification encodes a transition verification model $\mathcal{M} = \langle \mathsf{F}, \mathsf{A}, \mathsf{RI}, \mathsf{R}, \varphi_{\mathsf{verify}} \rangle$ together with a trajectory $\mathcal{SA} = \langle s_{0}, A_{1}, s_{1}, \ldots, A_{n}, s_{n} \rangle.$

    The top-level YAML entries \texttt{fluents}, \texttt{actions}, \texttt{rules}, and \texttt{trajectory} correspond to the formal components of the model by, receptively, encoding the set $\mathsf{F}$ of fluents; the set $\mathsf{A}$ of actions; the set $\mathsf{R}$ of rules; and a trajectory $\mathcal{SA}$. Also, $\mathsf{FV}$ is the set of fluent values of $\mathsf{F}$.

    We use so-called \emph{validators} to check instances before entering the \Tiles pipeline. Fluent validators ensure valid and non-conflicting fluent values, action validators ensure valid actions, rule validators ensure rule consistency, trajectory validators ensure well-formed alternating state-action sequences, and configuration validators ensure consistency of the complete specification.

    The parameters used in rule specifications are typed as follows:

    \begin{itemize}
        \item $I$ : (\texttt{input}) set of fluent values ($I \subseteqFV$);

        \item $a$ : (\texttt{action}) single action ($a \in \mathsf{A}$);

        \item $O$ : (\texttt{output}) set of fluent values ($O \subseteqFV$);

        \item $B$ : (\texttt{inhibited}) set of inhibited or contravened actions ($B \subseteq \mathsf{A}$);

        \item $t_{I}$ : (\texttt{t.input}) input state of a transition ($t_{I} \subseteqFV$);

        \item $t_{A}$ : (\texttt{t.action}) action component of a transition ($t_{A} \subseteq \mathsf{A}$);

        \item $t_{O}$ : (\texttt{t.output}) output state of a transition ($t_{O} \subseteqFV$);

        \item $f$ : (\texttt{input\_fluent}) single fluent value ($f \in \mathsf{FV}$);

        \item $A$ : (\texttt{actions}) set of actions ($A \subseteq \mathsf{A}$).
    \end{itemize}

    Each rule entry in \texttt{rules} encodes a tuple $ \langle s,I,A,O\rangle \in \mathsf{R} $ according to its constructor type:

    \begin{itemize}
        \item \texttt{CausesIfRule} (\texttt{input}, \texttt{action}, \texttt{output}) encodes $ \langle \mathbf{causes\text{-}if}, I, \{a\}, O \rangle ; $

        \item \texttt{IfRule} (\texttt{input}, \texttt{output}) encodes $ \langle \mathbf{if\text{-}rule}, I, \emptyset, O \rangle ; $

        \item \texttt{TriggersRule} (\texttt{input}, \texttt{action}) encodes $ \langle \mathbf{triggers}, I, \{a\}, \emptyset \rangle ; $

        \item \texttt{AllowsRule} (\texttt{input}, \texttt{action}) encodes $ \langle \mathbf{allows}, I, \{a\}, \emptyset \rangle ; $

        \item \texttt{InhibitsRule} (\texttt{input}, \texttt{action}) encodes $ \langle \mathbf{inhibits}, I, \{a\}, \emptyset \rangle ; $

        \item \texttt{NoConcurrencyRule} (\texttt{actions}) encodes $ \langle \mathbf{no\text{-}concurrency}, \emptyset, A, \emptyset \rangle ; $

        \item \texttt{DefaultRule} (\texttt{input\_fluent}) encodes $ \langle \mathbf{default}, \{f\}, \emptyset, \emptyset \rangle ; $

        \item \texttt{InfluencesIfRule} (\texttt{input}, \texttt{action}, \texttt{output}) encodes $ \langle \mathbf{influences\text{-}if}, I, \{a\}, O \rangle ; $

        \item \texttt{InfluencesRule} (\texttt{input}, \texttt{output}) encodes $ \langle \mathbf{influences}, I, \emptyset, O \rangle ; $

        \item \texttt{FacilitatesRule} (\texttt{input}, \texttt{action}) encodes $ \langle \mathbf{facilitates}, I, \{a\}, \emptyset \rangle ; $

        \item \texttt{ContravenesRule} (\texttt{input}, \texttt{action}) encodes $ \langle \mathbf{contravenes}, I, \{a\}, \emptyset \rangle ; $

        \item \texttt{ForbidsToCauseRule} (\texttt{input}, \texttt{output}) encodes $ \langle \mathbf{forbids\text{-}to\text{-}cause}, I, \emptyset, O \rangle .  $
    \end{itemize}

    The parameters \texttt{input} and \texttt{output} encode sets of fluent values, \texttt{actions} encodes a set of actions, \texttt{action} encodes a single action, and \texttt{input\_fluent} encodes a single fluent value.
    The entry \texttt{trajectory} encodes an alternating sequence $\langle s_{0}, A_{1}, s_{1}, \ldots, A_{n}, s_{n} \rangle$ where each \texttt{state} entry encodes a subset $s_i \subseteq \mathsf{FV}$ of fluent values, and each \texttt{actions} entry encodes a subset $A_i \subseteq \mathsf{A}$ of actions. The first and last entries are \texttt{state} entries, and consecutive entries alternate between \texttt{state} and \texttt{actions}, i.e., no two consecutive \texttt{state} entries and no two consecutive \texttt{actions} entries may occur.

    \section{Evaluation}
    \label{sec:evaluation}

    The proposed model can be applied in several settings depending on whether rules, actions, and fluent values are known or unknown. It supports reasoning about partially observable states, hidden or inferred actions, and the validation of unknown governing rules from observed trajectories. Here, we focus on the setting where the rules and actions are \emph{known}, but fluent values are partially \emph{unknown}. To illustrate this, we present a misinformation-spread example in which a trajectory initially appears invalid when emotional states are omitted, but becomes valid once emotional fluents are considered, demonstrating the non-monotonic impact of extending the model with richer representations.

    \begin{example}[Modeling Misinformation on Social Networks]
        In our model, multiple agents share news items as messages within a network. Each agent may read truthful and false information. Agents share messages with all their contacts. Every agent that reads the truth can perfectly distinguish between true and false information.
        The only fluent is \emph{information}, and the values of the \emph{information} fluent are \emph{informed} and \emph{uninformed}. The available actions are \emph{share-truth}, \emph{share-lie}, \emph{read-truth}, and \emph{read-lie}.
        The modeled rules are:

        \begin{enumerate}
            \item Being \emph{informed} \textbf{contravenes} \emph{share-lie}.
            \item There is \textbf{no concurrency} between \emph{read-truth} and \emph{read-lie}.
            \item There is \textbf{no concurrency} between \emph{share-truth} and \emph{share-lie}.
        \end{enumerate}

        \noindent The following trajectory is not valid: $\langle \{\emph{informed}\}, \{\emph{share-lie}\}, \{\emph{informed}\} \rangle $
    \end{example}

    While the previous example only considers informational states, we now extend the model with emotional fluents to capture how emotional alignment affects the validity of trajectories.

    \begin{example}[Modeling Emotional States]
        \label{ex:emotional-states}
        Inspired by recent psychological research on misinformation diffusion and cognitive dissonance~\cite{wang2020fake,ecker2022psychological}, we use emotional states to model how inconsistencies between beliefs and actions may produce emotional conflict. For that, we have the fluent of emotional \emph{alignment}. The fluent can have two possible values: emotionally \emph{conflicted} and emotionally \emph{aligned}. This alignment can influence in the actions that an agents perform.

        \begin{enumerate}
            \item Being \emph{informed} and \emph{aligned} \textbf{contravenes} \emph{share-lie} (this rule is extended).

            \item There is \textbf{no concurrency} between \emph{read-truth} and \emph{read-lie}.

            \item There is \textbf{no concurrency} between \emph{share-truth} and \emph{share-lie}.

            \item Being \emph{informed} and performing \emph{share-lie} \textbf{influences} becoming \emph{conflicted}.

            \item Being \emph{informed} and performing \emph{share-truth} \textbf{influences} becoming \emph{aligned}.

            \item Being \emph{informed} and \emph{conflicted} \textbf{facilitates} \emph{share-lie}.

            \item Being \emph{informed} and performing \emph{read-lie} \textbf{influences} becoming \emph{conflicted}.

            \item Being \emph{informed} and performing \emph{read-truth} \textbf{influences} becoming \emph{aligned}.
        \end{enumerate}

        \noindent With these extended rules, and adding the fluent \emph{conflicted}, the trajectory becomes valid: $\langle \{\emph{informed}, \emph{conflicted}\}, \{\emph{share-lie}\}, \{\emph{informed}, \emph{conflicted}\} \rangle$
    \end{example}

    These examples illustrate how the framework supports modular refinement of trajectory models through compositional rule extensions. Additional contextual information, such as emotional states, can be incorporated without changing the underlying verification pipeline, enabling more expressive interpretations of agent behavior while preserving the same verification structure.

    The pipeline is optimized to use lazy evaluation on the set of rules and efficiently simplify the evaluation conditions in the rules. An instance of Example~\ref{ex:emotional-states} executed on a standard multi-core Linux computer (Intel Core Ultra 5 135H, $1600$ MHz $\times $ $6$ + $400$ MHz $\times $ $12$) shows the values in Figure~\ref{fig:comparison}. The columns show the time difference between an instance of 1 rule and an instance with the same trajectory but with 8 rules. The trajectory was generated by cyclically replicating a pattern.
    The measurements are linear following the theoretical expectation of Proposition~\ref{prop:complexity}, with some difference, which could be explained by memory management overhead.
    %due to optimizations at different levels of execution.

    \begin{figure*}
        \centering
        \includegraphics[width=\textwidth]{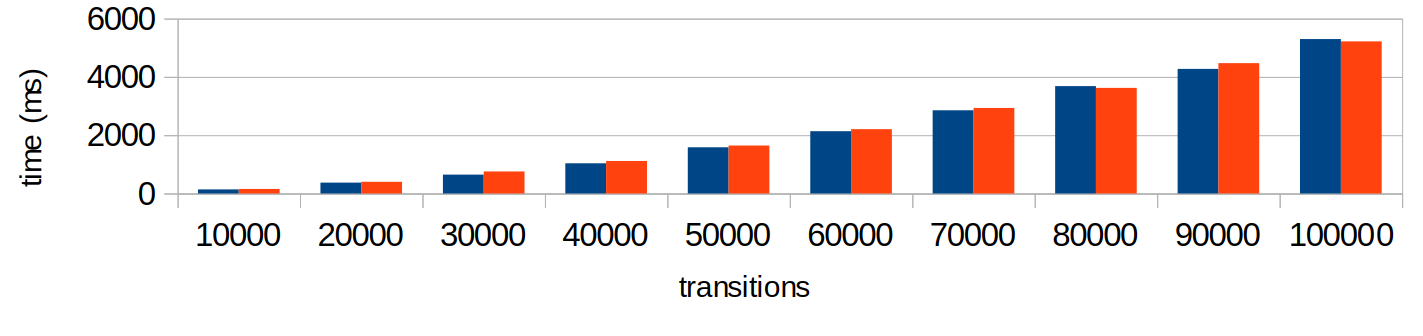}
        \caption{The columns compare execution times in milliseconds for the same number of transitions in a trajectory with different number of rules. The left side (blue) of each column is for 1 rule and the right side (red) of each column is for 8 rules.}
        \label{fig:comparison}
    \end{figure*}

    \section{Conclusion and Future Work}
    \label{sec:background}

    Building on seminal and recent research on reasoning about action and change~\cite{dworschak2008mbn,gelfond1998action,brannstrom2026humanemotionverificationaction}, existing implementations are commonly based on logic-programming systems such as GOLOG~\cite{Levesque-1997-Golog}, STRIPS~\cite{Fikes-1971-Strips}, and DLV$^{\text{K}}$~\cite{Eiter-2003-DVLK}. In this work, we introduced a modular and executable framework for transition and trajectory verification based on functional programming, implemented as verification pipelines in \Tiles~\cite{Mendez.Kampik.Aler.Dignum-2024-SCAI,Mendez.Kampik-2025-LNGAI} and executable specifications in \Soda~\cite{Mendez-2023-Soda,Mendez.Kampik-2025-EUMAS}.
    This approach supports reusable reasoning components, transparent execution workflows, concurrent rule evaluation, guaranteed pipeline termination, and YAML-based executable specifications. In this manner, the framework contributes toward bridging action reasoning and planning research towards compositional software engineering approaches~\cite{harmelen2021modular,mishra2024scalable,rudra2025composable} in functional and object-oriented programming, enabling integration with mainstream JVM-based software ecosystems~\cite{lindholm2013java}.

    Although this work has been limited to verification over explicitly specified trajectories and domain descriptions, and does not address automated trajectory generation or planning,
    it provides a full semantic embedding of the underlying action language ${\cal C}_{MT}$, and is extensible by design.

    \begin{credits}
        \subsubsection{\ackname} This work was partially supported by the Wallenberg AI, Autonomous Systems and Software Program (WASP) funded by the Knut and Alice Wallenberg Foundation.

    \end{credits}
%
% ---- Bibliography ----
%
% BibTeX users should specify bibliography style 'splncs04'.
% References will then be sorted and formatted in the correct style.
%
    \bibliographystyle{splncs04}
    \bibliography{main}

@inproceedings{Mendez.Kampik.Aler.Dignum-2024-SCAI,
    author = {Mendez, Julian Alfredo and Kampik, Timotheus and Aler Tubella, Andrea and Dignum, Virginia},
    title = { {A Clearer View on Fairness: Visual and Formal Representation for Comparative Analysis} },
    booktitle = {14th Scandinavian Conference on Artificial Intelligence, SCAI 2024},
    year = {2024},
    month = {June},
    pages = {112--120},
    editor = {Westphal, Florian and Peretz-Andersson, Einav and Riveiro, Maria and Bach, Kerstin and Heintz, Fredrik},
    organization = {Swedish Artificial Intelligence Society},
    doi = {10.3384/ecp208013},
    url = {https://ecp.ep.liu.se/index.php/sais/article/view/1005/913}
}

@misc{Mendez-2023-Soda,
    title = {{Soda: An Object-Oriented Functional Language for Specifying Human-Centered Problems}},
    author = {Mendez, Julian Alfredo},
    year = {2023},
    eprint = {2310.01961},
    archivePrefix = {arXiv},
    doi = {10.48550/arXiv.2310.01961},
    primaryClass = {
    id='cs.PL' full_name='Programming Languages' is_active=True alt_name=None in_archive='cs' is_general=False description='Covers programming language semantics, language features, programming approaches (such as object-oriented programming, functional programming, logic programming). Also includes material on compilers oriented towards programming languages; other material on compilers may be more appropriate in Architecture (AR). Roughly includes material in ACM Subject Classes D.1 and D.3.'
      }
}

@inproceedings{Mendez.Kampik-2025-EUMAS,
    author = {Mendez, Julian Alfredo and Kampik, Timotheus},
    title = { {Can Proof Assistants Verify Multi-agent Systems?} },
    booktitle = {Multi-Agent Systems},
    year = {2025},
    month = {June},
    pages = {323--339},
    editor = {Collier, Rem and Ricci, Alessandro and Nallur, Vivek and Burattini, Samuele and Omicini, Andrea},
    publisher = {Springer Nature Switzerland},
    address = {Cham},
    doi = {10.1007/978-3-031-93930-3_19}
}

@inproceedings{Mendez.Kampik-2025-LNGAI,
    author = {Mendez, Julian Alfredo and Kampik, Timotheus},
    title = {{Specification, Application, and Operationalization of a Metamodel of Fairness}},
    editor = {Liao, Beishui and Rotolo, Antonino and van der Torre, Leendert and Yu, Liuwen},
    booktitle = {{Logics for New-Generation AI}},
    volume = {5},
    month = {December},
    pages = {163--180},
    year = {2025},
    isbn = {978-1-84890-495-8},
    url = {https://www.collegepublications.co.uk/LNGAI/?00005}
}

@article{brannstrom2026humanemotionverificationaction, 
    title = {Human Emotion Verification by Action Languages via Answer Set Programming}, 
    doi = {10.1017/S1471068426100416}, 
    journal = {Theory and Practice of Logic Programming}, 
    author = {Br\"{a}nnstr\"{o}m, Andreas and Nieves, Juan Carlos}, 
    year = {2026}, 
    pages = {1--58} 
}

@article{cirillo2010human,
    title = {Human-aware task planning: an application to mobile robots},
    author = {Cirillo, Marcello and Karlsson, Lars and Saffiotti, Alessandro},
    journal = {ACM Transactions on Intelligent Systems and Technology (TIST)},
    volume = {1},
    number = {2},
    pages = {1--26},
    year = {2010},
    publisher = {ACM New York, NY, USA}
}

@article{gelfond1998action,
    title = {Action Languages},
    author = {Gelfond, Michael and Lifschitz, Vladimir},
    journal = {Computer and Information Science},
    volume = {3},
    number = {16},
    year = {1998},
    publisher = {Citeseer}
}

@article{dworschak2008mbn,
    title = {{Modeling {B}iological {N}etworks by {A}ction {L}anguages via {A}nswer {S}et {P}rogramming}},
    author = {Dworschak, S. and Grell, S. and Nikiforova, V.J. and Schaub, T. and Selbig, J.},
    journal = {Constraints},
    volume = {13},
    number = {1},
    pages = {21--65},
    year = {2008},
    publisher = {Springer}
}

@article{engelmann2023rv4jaca,
    title = {{RV4JaCa---Towards Runtime Verification of Multi-Agent Systems and Robotic Applications}},
    author = {Engelmann, Debora C and Ferrando, Angelo and Panisson, Alison R and Ancona, Davide and Bordini, Rafael H and Mascardi, Viviana},
    journal = {Robotics},
    volume = {12},
    number = {2},
    pages = {49},
    year = {2023},
    publisher = {MDPI}
}

@inproceedings{ferrando2025vitamin,
    title = {VITAMIN: VerIficaTion of A MultI ageNt system},
    author = {Ferrando, Angelo and Malvone, Vadim},
    booktitle = {Proceedings of the 24th International Conference on Autonomous Agents and Multiagent Systems},
    pages = {3023--3025},
    year = {2025}
}

@inproceedings{harmelen2021modular,
    title = {Modular Design Patterns for Systems that Learn and Reason},
    author = {Harmelen, Frank van},
    booktitle = {IEEE/WIC/ACM International Conference on Web Intelligence and Intelligent Agent Technology},
    pages = {4--4},
    year = {2021}
}

@book{mishra2024scalable,
    title = {Scalable AI and Design Patterns: Design, Develop, and Deploy Scalable AI Solutions},
    author = {Mishra, Abhishek},
    year = {2024},
    publisher = {Springer Nature}
}

@inproceedings{rudra2025composable,
    title = {Composable AI Stack for Intelligent Agents: Modular Orchestration Using Context Routing, Memory, and Tools},
    author = {Rudra, Angshuman and Agrawal, Manan},
    booktitle = {2025 International Conference on Computer and Applications (ICCA)},
    pages = {1--6},
    year = {2025},
    organization = {IEEE}
}

@article{monteith2024artificial,
    title = {Artificial intelligence and increasing misinformation},
    author = {Monteith, Scott and Glenn, Tasha and Geddes, John R and Whybrow, Peter C and Achtyes, Eric and Bauer, Michael},
    journal = {The British Journal of Psychiatry},
    volume = {224},
    number = {2},
    pages = {33--35},
    year = {2024},
    publisher = {Cambridge University Press}
}

@article{park2024ai,
    title = {AI deception: A survey of examples, risks, and potential solutions},
    author = {Park, Peter S and Goldstein, Simon and O'Gara, Aidan and Chen, Michael and Hendrycks, Dan},
    journal = {Patterns},
    volume = {5},
    number = {5},
    year = {2024},
    publisher = {Elsevier}
}

@article{sanchez2019designing,
    title = {Designing emotional BDI agents: good practices and open questions},
    author = {S{\'a}nchez-L{\'o}pez, Yanet and Cerezo, Eva},
    journal = {The Knowledge Engineering Review},
    volume = {34},
    year = {2019},
    publisher = {Cambridge University Press}
}

@inproceedings{pereira2007formal,
    title = {Formal modelling of emotions in BDI agents},
    author = {Pereira, David and Oliveira, Eug{\'e}nio and Moreira, Nelma},
    booktitle = {International Workshop on Computational Logic in Multi-Agent Systems},
    pages = {62--81},
    year = {2007},
    organization = {Springer}
}

@inproceedings{korevcko2013some,
    title = {On some concepts of emotional engine for BDI agent system},
    author = {Kore{\v{c}}ko, {\v{S}}tefan and Herich, Tom{\'a}{\v{s}}},
    booktitle = {2013 IEEE 14th International Symposium on Computational Intelligence and Informatics (CINTI)},
    pages = {527--532},
    year = {2013},
    organization = {IEEE}
}

@article{lorini2011logic,
    title = {A logic for reasoning about counterfactual emotions},
    author = {Lorini, Emiliano and Schwarzentruber, Fran{\c{c}}ois},
    journal = {Artificial Intelligence},
    volume = {175},
    number = {3-4},
    pages = {814--847},
    year = {2011},
    publisher = {Elsevier}
}

@article{adam2009logical,
    title = {A logical formalization of the OCC theory of emotions},
    author = {Adam, Carole and Herzig, Andreas and Longin, Dominique},
    journal = {Synthese},
    volume = {168},
    pages = {201--248},
    year = {2009},
    publisher = {Springer}
}

@inproceedings{dastani2012logic,
    title = {A logic of emotions: from appraisal to coping},
    author = {Dastani, Mehdi and Lorini, Emiliano},
    booktitle = {11th International Conference on Autonomous Agents and Multiagent Systems (AAMAS 2012)},
    pages = {1133--1140},
    year = {2012},
    organization = {ACM: Association for Computing Machinery}
}

@article{bolander2011epistemic,
    title = {Epistemic planning for single-and multi-agent systems},
    author = {Bolander, Thomas and Andersen, Mikkel Birkegaard},
    journal = {Journal of Applied Non-Classical Logics},
    volume = {21},
    number = {1},
    pages = {9--34},
    year = {2011},
    publisher = {Taylor \& Francis}
}

@inproceedings{davila2021simple,
    title = {A simple framework for cognitive planning},
    author = {Davila, Jorge Luis Fernandez and Longin, Dominique and Lorini, Emiliano and Maris, Fr{\'e}d{\'e}ric},
    booktitle = {Proceedings of the AAAI Conference on Artificial Intelligence},
    volume = {35},
    number = {7},
    pages = {6331--6339},
    year = {2021}
}

@inproceedings{lorini2022cognitive,
    title = {Cognitive planning in motivational interviewing},
    author = {Lorini, Emiliano and Sabouret, Nicolas and Ravenet, Brian and Davila, Jorge Luis Fernandez and Clavel, C{\'e}line},
    booktitle = {14th International Conference on Agents and Artificial Intelligence (ICAART 2022)},
    pages = {1--11},
    year = {2022}
}

@article{steunebrink2012formal,
    title = {A formal model of emotion triggers: an approach for BDI agents},
    author = {Steunebrink, Bas R and Dastani, Mehdi and Meyer, John-Jules Ch},
    journal = {Synthese},
    volume = {185},
    pages = {83--129},
    year = {2012},
    publisher = {Springer}
}

@article{balduccini2010formalization,
    title = {Formalization of psychological knowledge in answer set programming and its application},
    author = {Balduccini, Marcello and Girotto, Sara},
    journal = {Theory and Practice of Logic Programming},
    volume = {10},
    number = {4-6},
    pages = {725--740},
    year = {2010},
    publisher = {Cambridge University Press}
}

@inproceedings{jones2009personality,
    title = {Personality, Emotions and Physiology in a BDI Agent Architecture: The PEP-BDI Model},
    author = {Jones, Haza{\"e}l and Saunier, Julien and Lourdeaux, Domitile},
    booktitle = {2009 IEEE/WIC/ACM International Joint Conference on Web Intelligence and Intelligent Agent Technology},
    volume = {2},
    pages = {263--266},
    year = {2009},
    organization = {IEEE}
}

@book{lindholm2013java,
    title = {The Java virtual machine specification},
    author = {Lindholm, Tim and Yellin, Frank and Bracha, Gilad and Buckley, Alex},
    year = {2013},
    publisher = {Addison-wesley}
}

@article{Fikes-1971-Strips,
    title = {Strips: A new approach to the application of theorem proving to problem solving},
    journal = {Artificial Intelligence},
    volume = {2},
    number = {3},
    pages = {189-208},
    year = {1971},
    issn = {0004-3702},
    doi = {10.1016/0004-3702(71)90010-5},
    url = {https://www.sciencedirect.com/science/article/pii/0004370271900105},
    author = {Richard E. Fikes and Nils J. Nilsson}
}

@article{Levesque-1997-Golog,
    title = {GOLOG: A logic programming language for dynamic domains},
    journal = {The Journal of Logic Programming},
    volume = {31},
    number = {1},
    pages = {59-83},
    year = {1997},
    issn = {0743-1066},
    doi = {10.1016/S0743-1066(96)00121-5},
    url = {https://www.sciencedirect.com/science/article/pii/S0743106696001215},
    author = {Hector J. Levesque and Raymond Reiter and Yves Lesp\'{e}rance and Fangzhen Lin and Richard B. Scherl}
}

@article{Eiter-2003-DVLK,
    title = {A logic programming approach to knowledge-state planning, II: The DLVK system},
    journal = {Artificial Intelligence},
    volume = {144},
    number = {1},
    pages = {157-211},
    year = {2003},
    issn = {0004-3702},
    doi = {10.1016/S0004-3702(02)00367-3},
    url = {https://www.sciencedirect.com/science/article/pii/S0004370202003673},
    author = {Thomas Eiter and Wolfgang Faber and Nicola Leone and Gerald Pfeifer and Axel Polleres}
}

@techreport{McDermott-1998-PDDL,
    title = {PDDL---the planning domain definition language},
    author = {McDermott, Drew and Ghallab, Malik and Howe, Adele and Knoblock, Craig and Ram, Ashwin and Veloso, Manuela and Weld, Daniel and Wilkins, David},
    year = {1998},
    institution = {Yale Center for Computational Vision and Control}
}

@article{wang2020fake,
    title = {Fake news or bad news? Toward an emotion-driven cognitive dissonance model of misinformation diffusion},
    author = {Wang, Rui and He, Yuan and Xu, Jing and Zhang, Hongzhong},
    journal = {Asian Journal of Communication},
    volume = {30},
    number = {5},
    pages = {317--342},
    year = {2020},
    publisher = {Taylor \& Francis}
}

@article{ecker2022psychological,
    title = {The psychological drivers of misinformation belief and its resistance to correction},
    author = {Ecker, Ullrich KH and Lewandowsky, Stephan and Cook, John and Schmid, Philipp and Fazio, Lisa K and Brashier, Nadia and Kendeou, Panayiota and Vraga, Emily K and Amazeen, Michelle A},
    journal = {Nature Reviews Psychology},
    volume = {1},
    number = {1},
    pages = {13--29},
    year = {2022},
    publisher = {Nature Publishing Group US New York}
}

    \appendix

    \section{Appendix}

    \begin{proposition-appendix}[Properties of Contravenes]
        \normalfont
        \label{proof:contravenes-forbids-to-cause}

        Consider non-empty finite sets of fluent values $I, O$, an action $a$, consider the rules $r_{1} = \langle \mathbf{forbids\text{-}to\text{-}cause}, I, \emptyset, O \rangle$, $r_{2} = \langle \mathbf{causes\text{-}if}, I, \{a\}, O \rangle$, and $r_{3} = \langle \mathbf{contravenes}, I, \{a\}, \emptyset \rangle$. Consider that $a$ is not inhibited for $r_{1}$ and $r_{2}$. If a transition $t$ is valid for $r_{1}$ and for $r_{2}$, then it is valid for $r_{3}$, and if $t$ is invalid for $r_{1}$ and for $r_{2}$, then it is invalid for $r_{3}$.
    \end{proposition-appendix}

    \begin{proof}
        Let $t = \langle t_{I}, t_{A}, t_{O} \rangle$. Let us recall the semantic interpretations for the rules given in Definition~$\ref{def:verification-function}$:

        \begin{itemize}
            \item $r_{1} : (I \subseteq t_{I}) \Longrightarrow (O \cap t_{O} = \emptyset) $
            \item $r_{2} : (I \subseteq t_{I}) \land (\lvert A \rvert = 1) \land (A \subseteq t_{A}) \land (A \cap B = \emptyset) \Longrightarrow (O \subseteq t_{O}) $
            \item $r_{3} : (I \subseteq t_{I}) \Longrightarrow (\lvert A \rvert = 1) \land (A \cap t_{A} = \emptyset) $
        \end{itemize}
        where $B$ is the set of inhibited actions.

        Let us assume that $t$ is valid for $r_{1}$ and for $r_{2}$.
        If $r_{1}$ is not active for $t$, it means that $(I \subseteq t_{I})$ does not hold (by $r_{1}$), and therefore $r_{2}$ and $r_{3}$ are also not active. Then, $t$ is valid for $r_{3}$.
        If $r_{1}$ is active for $t$ and $r_{2}$ is not active for $t$, it means that $(\lvert A \rvert = 1) \land (A \subseteq t_{A}) \land (A \cap B = \emptyset) $ does not hold. Since $a$ is not inhibited, and by rule construction, the size of $A$ is always 1, then $(A \subseteq t_{A})$ does not hold. Thus, $A \cap t_{A} = \emptyset$ and $t$ is valid for $r_{3}$.
        Notice that $r_{2}$ cannot be active if $t$ is valid. To see this, let us assume that $r_{2}$ is active. That would mean that $r_{1}$ is also active for $t$, and in particular that $(O \cap t_{O} = \emptyset)$ and $(O \subseteq t_{O}) $. Since we assume that $t$ is valid, this can only be satisfied if $O = \emptyset$, which is not possible because $O$ is non-empty. Then, $r_{2}$ cannot be active if $t$ is valid.

        Let us assume that $t$ is invalid for $r_{1}$ and for $r_{2}$. Then, both $r_{1}$ and $r_{2}$ need to be active, which means that $(A \subseteq t_{A}) $ holds. Thus, $(A \cap t_{A} = \emptyset)$ does not hold, and $t$ is invalid for $r_{3}$.
        $\qed$
    \end{proof}

\end{document}